\documentclass[runningheads]{llncs}

\usepackage[T1]{fontenc}
\usepackage[utf8]{inputenc}
\usepackage{amsmath,amssymb}
\usepackage{graphicx}
\usepackage{tikz}
\usetikzlibrary{calc}
\usepackage{booktabs}
\usepackage[hidelinks]{hyperref}

\newcommand{\dist}{\operatorname{dist}}
\newcommand{\spn}{\operatorname{span}}
\newcommand{\Rone}{\mathbb{R}}
\newcommand{\wt}{w}

\begin{document}

\title{The Facility Advantage in the One-Round Discrete Voronoi Game on a Line}

\titlerunning{The Facility Advantage in the Discrete Voronoi Game on a Line}

\author{Tamal Maharaj}
\authorrunning{T. Maharaj}
\institute{Department of Computer Science,\\
Ramakrishna Mission Vivekananda Educational and Research Institute,\\
Belur Math, Howrah 711202, India\\
\email{tamal@gm.rkmvu.ac.in}}

\maketitle

\begin{abstract}
In the one-round discrete Voronoi game a multiset $V$ of $n$ \emph{voters} on a
line is given; player $\mathcal{P}$ places $k$ facilities, player $\mathcal{Q}$
then places $\ell$, and each voter is won by the nearer facility, ties going to
$\mathcal{P}$. Player $\mathcal{P}$ wins if it keeps at least $n/2$ voters. In the
vocabulary of competitive location this is the absolute $(\ell|k)$-centroid
problem on a path with unit demands, and the responder's problem is the
$(\ell|X_k)$-medianoid, whose closed form on a path --- the sum of the $\ell$
largest of at most $2k$ explicit \emph{marginals} --- is due to Spoerhase and
Wirth. We record this structure, with complete proofs in an appendix, and draw
two consequences that we believe are new.

First, we compute the value of the game against a single responding facility,
$\Gamma_{k,1}(V)$, together with an optimal strategy for $\mathcal{P}$, in
$O(n\log n)$ time for arbitrary positive real demands and every $k$. This
improves the $O(kn\log^2 n)$ bound of Lazar and Tamir for the absolute
$(1|k)$-centroid on a path, and in the Voronoi-game setting compares with the
$O(kn^4)$ general algorithm of de Berg, Kisfaludi-Bak and Mehr.

Second, we study the \emph{facility advantage} $k^\ast(\ell)$, the least $k$ for
which $\mathcal{P}$ wins \emph{every} instance against $\ell$ facilities. We
prove $k^\ast(\ell)\le 2\ell-1$, exhibit instances proving
$k^\ast(\ell)\ge\ell+1$ for $2\le\ell\le6$ (an exact, computer-assisted proof
resting on Spoerhase and Wirth's half-integer discretisation), determine
$k^\ast(1)=1$ and $k^\ast(2)=3$, and show that on uniform instances $k=\ell$
already suffices, so the extremal instances are weighted and $\mathcal{Q}$ wins
them by a single voter. We conjecture $k^\ast(\ell)=\ell+1$ for all $\ell\ge2$.

\keywords{Voronoi game \and competitive facility location \and
$(r|p)$-centroid \and algorithmic game theory \and plurality point.}
\end{abstract}

\section{Introduction}
\label{sec:intro}

Voronoi games model competitive facility location: two players place facilities
in a common arena and each is paid by the region, or the set of clients, that is
closer to its own facilities than to the opponent's. The continuous version was
introduced by Ahn et al.~\cite{ahn2004competitive} for a segment and a circle,
and the \emph{one-round} version --- the first player commits to all its
facilities before the second player moves --- by Cheong et
al.~\cite{cheong2004one} for a square; see also Fekete and
Meijer~\cite{fekete2005one}. The economic question goes back to
Hotelling~\cite{hotelling1929stability}.

In the \emph{discrete} variant the players compete for a finite multiset $V$ of
$n$ points, called \emph{voters}. Let $k,\ell\ge 1$. Player $\mathcal{P}$ first
selects a set $P$ of $k$ points of $\Rone^d$, then player $\mathcal{Q}$ selects
a set $Q$ of $\ell$ points. Player $\mathcal{P}$ wins a voter $v\in V$ if
\[
  \dist(v,P)\;\le\;\dist(v,Q),
  \qquad \dist(v,X):=\min_{x\in X}\dist(v,x),
\]
so that ties favour $\mathcal{P}$; write $V[P\succ Q]$ for the multiset of
voters won by $\mathcal{P}$. Player $\mathcal{P}$ \emph{wins the game} if
$|V[P\succ Q]|\ge n/2$. The value of the game is
\[
  \Gamma_{k,\ell}(V)\;:=\;\max_{P\subset\Rone^d,\,|P|=k}\;
        \min_{Q\subset\Rone^d,\,|Q|=\ell}\;\bigl|V[P\succ Q]\bigr| ,
\]
and $\mathcal{P}$ wins iff $\Gamma_{k,\ell}(V)\ge n/2$. For $k=\ell=1$ this is
the classical \emph{majority point} of spatial voting
theory~\cite{mckelvey1976voting,deberg2018plurality}. Banik, Bhattacharya and
Das~\cite{banik2013optimal} studied the line, $d=1$, for $k=\ell$ and gave an
$O(n^{k-\lambda_k})$-time algorithm; de Berg, Kisfaludi-Bak and
Mehr~\cite{deberg2019oneround} gave the first polynomial algorithm for $d=1$,
running in $O(kn^4)$ time for arbitrary $k$ and $\ell$, proved
$\Sigma^P_2$-hardness for $d\ge2$, and asked for ``a simpler (and perhaps
faster) algorithm'' on the line.

\paragraph{The same game in competitive location.} In the operations-research
literature this is the $(r|p)$-centroid problem of Hakimi~\cite{hakimi1983}: a
\emph{leader} places $p$ facilities on a network with weighted demand nodes, a
\emph{follower} answers with $r$, a node patronises the follower only if it is
\emph{strictly} closer, and the leader minimises the follower's market share.
The follower's sub-problem against a fixed leader $X_p$ is the
$(r|X_p)$-medianoid. The one-round discrete Voronoi game on a line is exactly
the \emph{absolute} (facilities anywhere) $(\ell|k)$-centroid problem on a
path whose node weights are the multiplicities of $V$. On a path the follower's
problem is well understood: Megiddo, Zemel and Hakimi~\cite{megiddo1983maxcov}
solve the $(r|X_p)$-medianoid in $O(n)$ time, and Spoerhase and
Wirth~\cite{spoerhase2009centroid} observe that one follower facility inside a
bounded gap of the leader wins exactly an open sub-interval of half the gap's
length, that two facilities clear a gap, that the second facility's increment
never exceeds the first's, and hence that the follower's optimum is the sum of
the $r$ largest of at most $2p$ \emph{marginals}~\cite[Lemma~2.3]{spoerhase2009centroid}.
For the leader's problem they show that the absolute $(r,p)$-centroid on a path
is NP-hard when $r$, $p$ and arbitrary demand weights are part of the input
(the Voronoi-game algorithms are polynomial in the \emph{unary} total demand
$n$, which is consistent), that the discrete version is solvable in $O(pn^4)$
time by a $k$-sum shortest path reduction, and that the $(1,p)$-centroid on a
tree is polynomial. Lazar and Tamir~\cite{lazar2013improved} made the latter
strongly polynomial and, on a path, obtained $O(pn\log^2 n)$ for the absolute
$(1,p)$-centroid with positive real weights~\cite[Theorem~6.2]{lazar2013improved}.

Although de Berg et al.\ cite~\cite{spoerhase2009centroid}, the two literatures
have developed largely separately, and the explicit structure of the responder
has not, to our knowledge, been exploited on the Voronoi-game side. This paper
does so.

\subsection*{Our contribution}

\begin{enumerate}
\item \textbf{The structure of the responder (Section~\ref{sec:structure}).}
  We restate the gap decomposition of the responder's optimum in the precise
  form we need --- with the role of ties, of open windows, and of voter blocks
  of span less than half a gap made explicit --- with complete proofs in
  Appendix~\ref{app:proofs}. Nothing in this section is new in substance; it
  is included because every later argument manipulates these marginals exactly.

\item \textbf{$\Gamma_{k,1}$ in $O(n\log n)$ time (Section~\ref{sec:ell1}).}
  Against a single responding facility the leader's problem is a min--max over
  the marginals. Following the \emph{$W$-bounding set} greedy of Spoerhase,
  Wirth, Lazar and Tamir, we place the leader's points left to right, each as
  far right as a target value permits. Our contribution is to evaluate one
  sweep of this greedy in $O(n)$ time for \emph{all} $k$ at once, by
  precomputing the minimal heavy voter blocks; Lazar and Tamir spend
  $O(n\log n)$ per placed point. With the same outer search over the $O(n^2)$
  candidate values this yields $O(n\log n)$ total time for positive real
  demands (Theorem~\ref{thm:ell1}), improving $O(kn\log^2 n)$. In the
  Voronoi-game vocabulary: $\Gamma_{k,1}(V)$ and an optimal $P$ in
  $O(n\log n)$ time against $O(kn^4)$.

\item \textbf{The facility advantage (Section~\ref{sec:advantage}).} Let
  $k^\ast(\ell)$ be the least $k$ such that $\Gamma_{k,\ell}(V)\ge n/2$ for
  \emph{every} finite multiset $V\subset\Rone$. Banik, De Carufel, Maheshwari
  and Smid~\cite{banik2016epsnets} proved via $\epsilon$-nets that in two and
  three dimensions some constant factor $k=c\ell$ suffices, without determining
  $c$. On a line we prove $k^\ast(\ell)\le2\ell-1$ (Theorem~\ref{thm:upper}),
  give instances proving $k^\ast(\ell)\ge\ell+1$ for $2\le\ell\le6$
  (Theorem~\ref{thm:lower}), settle $k^\ast(1)=1$ and $k^\ast(2)=3$, and prove
  that on \emph{uniform} instances $k=\ell$ suffices
  (Proposition~\ref{prop:uniform}) --- so the extremal instances are weighted
  and the lower bound is an indivisibility obstruction, $\mathcal{Q}$ winning
  by exactly one voter. The lower bounds are computer-assisted but exact: the
  half-integer discretisation of Spoerhase and Wirth~\cite[Theorem~3.4]{spoerhase2009centroid}
  makes the continuous maximisation over $P$ a finite enumeration.

\item \textbf{Verification (Section~\ref{sec:exp}).} Every structural and
  algorithmic statement was checked by exhaustive computation against a blind
  brute-force implementation of the game definition.
\end{enumerate}

Throughout, $d=1$ and $V=\{v_1\le\dots\le v_n\}$; we identify a point with its
coordinate. For a sub(multi)set $S$ of voters, $\wt(S)$ is its cardinality
counted with multiplicity (its \emph{weight}), and $\spn(S):=\max S-\min S$.

\section{The Structure of the Responder}
\label{sec:structure}

Fix a strategy $P=\{p_1<\dots<p_k\}$ of $\mathcal{P}$. The points of $P$ cut the
line into $k+1$ \emph{gaps}:
\[
  G_0:=(-\infty,p_1),\qquad
  G_i:=(p_i,p_{i+1})\ (1\le i\le k-1),\qquad
  G_k:=(p_k,+\infty).
\]
We call $G_0,G_k$ the \emph{outer} gaps and $G_1,\dots,G_{k-1}$ the \emph{inner}
gaps, and write $V_G:=V\cap G$. Voters lying exactly on a point of $P$ are won by
$\mathcal{P}$ whatever $\mathcal{Q}$ does; we call them \emph{absorbed} and write
$A(P)$ for their multiset. For a single point $q$ let
$W(q):=\{v\in V:\ |v-q|<\dist(v,P)\}$ be the voters $q$ takes from
$\mathcal{P}$; the multiset won by $\mathcal{Q}$ is $\bigcup_{q\in Q}W(q)$.
The statements of this section are due to Spoerhase and
Wirth~\cite{spoerhase2009centroid}, in the discrete setting; the short proofs
for the absolute setting used here are collected in Appendix~\ref{app:proofs}.

\begin{lemma}[Locality]
\label{lem:locality}
Let $q\in\Rone\setminus P$ and let $G$ be the gap of $P$ containing $q$. Then
$W(q)\subseteq G$, and
\begin{enumerate}
 \item[(a)] if $G=G_0$ then $W(q)=V\cap\bigl(-\infty,\tfrac{q+p_1}{2}\bigr)$;
 \item[(b)] if $G=G_k$ then $W(q)=V\cap\bigl(\tfrac{p_k+q}{2},\infty\bigr)$;
 \item[(c)] if $G=G_i=(p_i,p_{i+1})$ then
       $W(q)=V\cap\bigl(\tfrac{p_i+q}{2},\,\tfrac{q+p_{i+1}}{2}\bigr)$,
       an \emph{open} interval of length $\tfrac{1}{2}|G_i|$.
\end{enumerate}
\end{lemma}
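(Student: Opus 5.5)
The plan is to handle each voter $v$ separately and decide when $|v-q|<\dist(v,P)$. Everything is one-dimensional, so the condition reduces to comparing $|v-q|$ with the distances from $v$ to the two points of $P$ that bracket $q$. Write $a<q<b$, where $a$ and $b$ are the consecutive points of $P$ around $q$, with $a=-\infty$ if $G=G_0$ and $b=+\infty$ if $G=G_k$.

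\textbf{Step 1 (locality).} First I show that no voter outside $G$ can be taken. Suppose $v\notin G$, say $v\ge b$ with $b$ finite. Then $\dist(v,P)\le |v-b| = v-b$. On the other hand $|v-q| = v-q > v-b$ because $q<b$. So $|v-q|>\dist(v,P)$ and $v\notin W(q)$. The case $v\le a$ is symmetric. This gives $W(q)\subseteq G$.

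\textbf{Step 2 (reduction to the two endpoints).} Next take $v\in G$. The nearest point of $P$ to $v$ is $a$ or $b$: any other point of $P$ lies beyond one of them, so it is farther from $v$. Hence
\[
\dist(v,P)=\min(|v-a|,|v-b|),
\]
with an infinite endpoint simply dropped. The condition $v\in W(q)$ therefore becomes the pair of strict inequalities $|v-q|<v-a$ and $|v-q|<b-v$.

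\textbf{Step 3 (solving the inequalities).}
\begin{itemize}
\item Consider $|v-q|<v-a$. If $v\le q$, it reads $q-v<v-a$, i.e.\ $v>\tfrac{a+q}{2}$. If $v>q$, it holds automatically since $q>a$. Together these give exactly $v>\tfrac{a+q}{2}$.
\item Symmetrically, $|v-q|<b-v$ is equivalent to $v<\tfrac{q+b}{2}$.
\end{itemize}
Intersecting with $G$ changes nothing, because $\tfrac{a+q}{2}>a$ and $\tfrac{q+b}{2}<b$. This gives (c). Its length is $\tfrac{q+b}{2}-\tfrac{a+q}{2}=\tfrac{b-a}{2}=\tfrac12|G_i|$. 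In case (a) only the constraint from $b=p_1$ is present, and in case (b) only the one from $a=p_k$; these are the stated half-lines.

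The intervals are open because ties favour $\mathcal{P}$, which makes the defining inequality of $W(q)$ strict. This is also why absorbed voters, and voters at the midpoints, are never taken. There is no real obstacle here. The only point needing care is the claim in Step 2 that the nearest point of $P$ is a bracketing one, which follows from the ordering $p_1<\dots<p_k$.
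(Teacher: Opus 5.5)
Your proof is correct and follows essentially the same route as the paper: you exclude voters outside the gap by comparing with the nearer bracketing point of $P$, then reduce $\dist(v,P)$ to the two gap endpoints and solve the two strict inequalities according to which side of $q$ the voter lies on. Treating the outer gaps through infinite endpoints is simply a compact way of writing the paper's ``analogous'' outer cases.
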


\begin{figure}[t]
\centering
\begin{tikzpicture}[x=1cm,y=1cm]
  \draw[thick] (-0.4,0) -- (9.4,0);
  \foreach \x in {0,9} \filldraw (\x,0) circle (2.6pt);
  \node[below] at (0,-0.15) {$p_i$};
  \node[below] at (9,-0.15) {$p_{i+1}$};
  \filldraw (3.2,0) circle (2.2pt);
  \node[above] at (3.2,0.12) {$q$};
  \draw[very thick,|-|] (1.6,0.75) -- (6.1,0.75);
  \node at (3.85,1.1) {\small won by $q$: open window of width $\tfrac{1}{2}|G_i|$};
  \draw[dashed] (1.6,0) -- (1.6,0.75);
  \draw[dashed] (6.1,0) -- (6.1,0.75);
  \node[below] at (1.6,-0.15) {\scriptsize $\frac{p_i+q}{2}$};
  \node[below] at (6.1,-0.15) {\scriptsize $\frac{q+p_{i+1}}{2}$};
  \foreach \x in {0.7,2.0,2.4,4.6,5.3,7.4,8.3}
     \draw[fill=black!25] (\x,0) circle (2.0pt);
  \draw[->,gray] (3.2,-0.75) -- (1.1,-0.75);
  \draw[->,gray] (3.2,-0.75) -- (5.3,-0.75);
  \node[gray] at (3.2,-1.1) {\scriptsize $q$ slides $\Rightarrow$ window slides across all of $G_i$};
\end{tikzpicture}
\caption{Lemma~\ref{lem:locality}(c): one facility of $\mathcal{Q}$ inside a
bounded gap wins an open window of width exactly half the gap, whose position
follows $q$~\cite[\S2.1]{spoerhase2009centroid}.}
\label{fig:window}
\end{figure}

As $q$ ranges over $(p_i,p_{i+1})$ the left endpoint $\tfrac{p_i+q}{2}$ ranges
over $\bigl(p_i,\tfrac{p_i+p_{i+1}}{2}\bigr)$, so the window sweeps the whole gap
(Fig.~\ref{fig:window}). We make the attainable sets precise; the one subtlety is
a voter exactly at the midpoint of the gap.

\begin{lemma}[Free sliding]
\label{lem:window}
Let $G=(a,b)$ be an inner gap and $\lambda:=\tfrac{1}{2}(b-a)$. Every $W(q)$,
$q\in G$, equals $V\cap I$ for an open interval $I\subseteq(a,b)$ of length
$\lambda$. Conversely, for every open interval $I\subseteq(a,b)$ of length
$\lambda$ there is a $q\in G$ with $W(q)\supseteq V\cap I$, with equality
whenever the left endpoint of $I$ lies strictly between $a$ and $a+\lambda$.
\end{lemma}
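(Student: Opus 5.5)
The plan is to read off both directions from Lemma~\ref{lem:locality}(c), using the explicit parametrisation of the window by $q$. For the forward direction, let $q\in G=(a,b)$. Lemma~\ref{lem:locality}(c) gives $W(q)=V\cap I$ with $I=\bigl(\tfrac{a+q}{2},\tfrac{q+b}{2}\bigr)$. This is an open interval of length $\tfrac{b-a}{2}=\lambda$. Since $a<q<b$, its endpoints satisfy $a<\tfrac{a+q}{2}$ and $\tfrac{q+b}{2}<b$, so $I\subseteq(a,b)$.

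For the converse, write $I=(s,s+\lambda)$. The condition $I\subseteq(a,b)$ forces $a\le s\le a+\lambda$.

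\emph{Interior case} $a<s<a+\lambda$. Invert the parametrisation and set $q:=2s-a$. Then $\tfrac{a+q}{2}=s$ and $\tfrac{q+b}{2}=s-\tfrac a2+\tfrac b2=s+\lambda$. Moreover $a<q<b$ is equivalent to $a<s<a+\lambda$, so $q\in G$. Lemma~\ref{lem:locality}(c) then gives $W(q)=V\cap I$ exactly, which is the claimed equality.

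\emph{Boundary case} $s=a$, so $I=(a,a+\lambda)$. The formal choice $q=a$ is not in $G$, so instead take $q=a+2\varepsilon$ for a small $\varepsilon>0$. Its window is $(a+\varepsilon,a+\lambda+\varepsilon)$. Because $V$ is finite, we can choose $\varepsilon$ smaller than the distance from $a$ to the leftmost voter of $V\cap I$ (any $\varepsilon<\lambda$ works if $V\cap I=\emptyset$). Then every voter of $V\cap I$ lies in $(a+\varepsilon,a+\lambda)$ and hence in $W(q)$, so $W(q)\supseteq V\cap I$. The window may also pick up voters in $[a+\lambda,a+\lambda+\varepsilon)$. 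In particular it picks up any voter sitting exactly at the midpoint $a+\lambda$, and this is precisely why only containment is claimed here.

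\emph{Boundary case} $s=a+\lambda$. This is symmetric: take $q=b-2\varepsilon$, with $\varepsilon$ below the distance from the rightmost voter of $V\cap I$ to $b$.

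The only delicate step is this boundary handling, namely choosing $\varepsilon$ via finiteness of $V$ and recognising that the midpoint voter prevents equality there. The rest is direct substitution into Lemma~\ref{lem:locality}(c).
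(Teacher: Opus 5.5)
Your proof is correct and takes essentially the same route as the paper. Both read the window $\bigl(\tfrac{a+q}{2},\tfrac{a+q}{2}+\lambda\bigr)$ off Lemma~\ref{lem:locality}(c) and invert $q\mapsto\tfrac{a+q}{2}$, a bijection from $(a,b)$ onto $(a,a+\lambda)$, to get the interior equality case; both then treat the two boundary positions by a small shift justified by the finiteness of $V$, and observe that only a voter exactly at the midpoint $a+\lambda$ can spoil equality.
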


\begin{lemma}[Blocks]
\label{lem:block}
Let $G=(a,b)$ be an inner gap, $\lambda=\tfrac{1}{2}(b-a)$, and
\[
   h(G)\;:=\;\max\bigl\{\wt(B)\ :\ B\subseteq V_G \text{ a contiguous block with }
                        \spn(B)<\lambda \bigr\},
\]
where a \emph{contiguous block} is $V\cap J$ for an interval $J$, and $h(G)=0$ if
$V_G=\emptyset$. Then the maximum weight $\mathcal{Q}$ wins in $G$ with a single
facility is exactly $h(G)$.
\end{lemma}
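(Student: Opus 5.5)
The plan is to prove the two inequalities separately, using Lemma~\ref{lem:locality} and Lemma~\ref{lem:window} to reduce everything to open intervals of length $\lambda$ inside $(a,b)$. The link between the two descriptions is that, for a contiguous block $B\subseteq V_G$, we have $\spn(B)<\lambda$ if and only if $B\subseteq I$ for some open interval $I$ of length $\lambda$. Indeed, if $B\subseteq I$ then $\spn(B)<\lambda$ because $I$ is open. Conversely, if $\spn(B)<\lambda$, a slightly enlarged open interval around $[\min B,\max B]$ of length exactly $\lambda$ contains $B$.

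\textbf{Upper bound.} Fix $q\in G$. By Lemma~\ref{lem:locality}, $W(q)\subseteq G$, and by Lemma~\ref{lem:window}, $W(q)=V\cap I$ for an open interval $I\subseteq(a,b)$ of length $\lambda$. Then $W(q)$ is a contiguous block contained in $V_G$. Its span is strictly less than $\lambda$, since all its points lie in an open interval of length $\lambda$. Hence $\wt(W(q))\le h(G)$. The case $V_G=\emptyset$ is trivial.

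\textbf{Lower bound.} Take a maximising block $B=V\cap J$ with $B\subseteq V_G$ and $\spn(B)<\lambda$, and set $m=\min B$ and $M=\max B$, so that $a<m\le M<b$. I need an open interval $I\subseteq(a,b)$ of length $\lambda$ with $B\subseteq V\cap I$. The converse part of Lemma~\ref{lem:window} then gives a $q$ with $W(q)\supseteq V\cap I\supseteq B$, so $\mathcal{Q}$ wins at least $\wt(B)=h(G)$.

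To choose $I=(s,s+\lambda)$, I need $s<m$, $s+\lambda>M$, $s\ge a$ and $s+\lambda\le b$. This is the constraint $\max(a,M-\lambda)\le s<\min(m,b-\lambda)$, read as $s\in[a,b-\lambda]$ together with $M-\lambda<s<m$. The interval $(M-\lambda,m)$ is nonempty because $M-m<\lambda$. It meets $[a,a+\lambda]$ because $m>a$ and $M-\lambda<b-\lambda=a+\lambda$. So a valid $s$ exists, and generically it can be chosen strictly inside $(a,a+\lambda)$.

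\textbf{Main obstacle.} The only delicate point is the boundary bookkeeping: the window is open and the gap is open. I must make sure the chosen $I$ satisfies the hypotheses of the converse in Lemma~\ref{lem:window}, namely $I\subseteq(a,b)$ with the stated length. Equality is not needed, since containment $W(q)\supseteq V\cap I$ suffices for the lower bound, and the upper bound already caps $W(q)$ at $h(G)$. Combining the two directions yields exactly $h(G)$.
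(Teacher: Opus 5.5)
Your proof is correct and matches the paper's argument. The upper bound comes from Lemma~\ref{lem:window}. For the lower bound you choose the window's left endpoint in $(\max B-\lambda,\min B)\cap[a,a+\lambda]$; this set is non-empty because $\spn(B)<\lambda$, $\min B>a$ and $\max B<b$, and the converse of Lemma~\ref{lem:window} then finishes. One cosmetic slip: your combined constraint $\max(a,\max B-\lambda)\le s$ should be strict on the $\max B-\lambda$ side, as your ``read as'' clause already says.
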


\begin{lemma}[Clearing a gap; concavity]
\label{lem:clear}
One facility placed in an outer gap $G$ wins all of $V_G$; two facilities placed
in an inner gap $G$ win all of $V_G$; further facilities gain nothing. Moreover,
for every inner gap, $h(G)\ge\tfrac12\wt(V_G)$, i.e.\ $h(G)\ge\wt(V_G)-h(G)$.
\end{lemma}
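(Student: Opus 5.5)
The lemma has three parts: outer gaps, inner gaps with two facilities, and the concavity inequality. I will prove them in that order, using only Lemma~\ref{lem:locality} and Lemma~\ref{lem:block}.

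\textbf{Outer gaps.} Take $G=G_0=(-\infty,p_1)$; the case $G_k$ is symmetric. Place $q$ with $\max(V_{G_0})<q<p_1$, which is possible because $V$ is finite (if $V_{G_0}=\emptyset$ there is nothing to prove). By Lemma~\ref{lem:locality}(a), $W(q)=V\cap(-\infty,\tfrac{q+p_1}{2})$. Since $\tfrac{q+p_1}{2}>q>\max V_{G_0}$, this set contains all of $V_{G_0}$. It contains nothing more, because $W(q)\subseteq G$.

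\textbf{Inner gaps, two facilities.} Let $G=(a,b)$ and let $V_G$ lie in $[v_{\min},v_{\max}]\subset(a,b)$. Place $q_1\in(a,\min(v_{\min},b-\varepsilon))$ close to $a$ and $q_2$ close to $b$. Lemma~\ref{lem:locality}(c) gives
\[
W(q_1)=V\cap\bigl(\tfrac{a+q_1}{2},\tfrac{q_1+b}{2}\bigr),\qquad W(q_2)=V\cap\bigl(\tfrac{a+q_2}{2},\tfrac{q_2+b}{2}\bigr).
\]
As $q_1\downarrow a$, the first window tends to $(a,\tfrac{a+b}{2})$; as $q_2\uparrow b$, the second tends to $(\tfrac{a+b}{2},b)$. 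The two windows overlap around the midpoint $\tfrac{a+b}{2}$ as soon as $\tfrac{q_1+b}{2}>\tfrac{a+q_2}{2}$, which holds whenever $q_2-q_1<b-a$. So for $q_1,q_2$ close enough to the endpoints, the union of the windows is $(\tfrac{a+q_1}{2},\tfrac{q_2+b}{2})$. This interval contains $[v_{\min},v_{\max}]$, hence all of $V_G$, and a voter at the midpoint is covered too. Further facilities gain nothing: by locality, each facility only wins voters of its own gap, and after two facilities nothing in $G$ is left to take.

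\textbf{Concavity.} Let $m=\tfrac{a+b}{2}$ and split $V_G$ into $L=V\cap(a,m)$, $R=V\cap(m,b)$ and $M=V\cap\{m\}$. The block $L$ is contiguous, and its span is less than $\lambda=m-a$, since it sits inside the open interval $(a,m)$. Likewise $R$ has span less than $\lambda$. By Lemma~\ref{lem:block}, $h(G)\ge\max(\wt(L),\wt(R))$.

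The main obstacle is the midpoint mass $M$, which neither $L$ nor $R$ contains. I would handle it by showing that $L\cup M$, or symmetrically $M\cup R$, is also contiguous with span less than $\lambda$. Indeed $\spn(L\cup M)=m-\min L<m-a=\lambda$, and similarly $\spn(M\cup R)<\lambda$. Therefore
\[
h(G)\ge\max\bigl(\wt(L)+\wt(M),\,\wt(R)+\wt(M)\bigr)\ge\tfrac12\wt(V_G),
\]
because the two quantities inside the maximum sum to $\wt(V_G)+\wt(M)\ge\wt(V_G)$.

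It remains to check that these blocks really are of the form $V\cap J$ with $J\subseteq G$. Take $J=[\min L,m]$, or $J=[m,\max R]$ for the other side. This choice of $J$ settles it.
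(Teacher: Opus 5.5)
Your proof is correct and takes essentially the same route as the paper. Like the paper, you clear gaps with facilities placed near the gap endpoints, and you get concavity by splitting $V_G$ at the midpoint into two blocks of span $<\lambda$. The only cosmetic difference is that you attach the midpoint mass to both sides ($L\cup M$ and $M\cup R$, whose weights sum to $\wt(V_G)+\wt(M)$), while the paper attaches it to one side only ($V\cap(a,c)$ and $M\cup B$, whose weights sum to exactly $\wt(V_G)$).
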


Define the \emph{marginal multiset} of $P$:
\[
  M(P)\ :=\ \bigl\{\wt(V_{G_0}),\ \wt(V_{G_k})\bigr\}\ \cup
  \bigcup_{i=1}^{k-1}\bigl\{\,h(G_i),\ \wt(V_{G_i})-h(G_i)\,\bigr\},
\]
with zero entries discarded; $|M(P)|\le 2k$.

\begin{theorem}[Gap decomposition~{\cite[Lemma~2.3]{spoerhase2009centroid}}]
\label{thm:decomp}
For every $P$ with $|P|=k$ and every $\ell\ge1$,
$\max_{|Q|=\ell}\bigl|V\setminus V[P\succ Q]\bigr|=\sum_{j=1}^{\ell}\mu_j$,
where $\mu_1\ge\mu_2\ge\cdots$ is $M(P)$ in non-increasing order (padded with
zeros). Given $V$ and $P$ sorted, this value and an optimal $Q$ are computed in
$O(n+k)$ time.
\end{theorem}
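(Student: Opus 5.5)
The proof has two parts: first that $\sum_{j\le\ell}\mu_j$ is attainable, then that no $Q$ does better. The algorithm follows at the end.

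\emph{Lower bound.} For each gap I fix a menu of facility placements and their gains:
\begin{itemize}
\item outer gap: one facility (Lemma~\ref{lem:clear}) wins $\wt(V_{G_0})$ or $\wt(V_{G_k})$;
\item inner gap $G_i$, one facility: by Lemma~\ref{lem:block} it wins $h(G_i)$;
\item inner gap $G_i$, second facility: by Lemma~\ref{lem:clear} the pair wins all of $V_{G_i}$, so the second adds exactly $\wt(V_{G_i})-h(G_i)$.
\end{itemize}
Lemma~\ref{lem:clear} also gives $h(G_i)\ge \wt(V_{G_i})-h(G_i)$. So if the $\ell$ largest marginals are taken greedily, the second marginal of an inner gap is never chosen without its first (ties are broken in favour of the first). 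The chosen multiset is therefore realised by placing one or two facilities in the relevant gaps, at most one in each outer gap. Leftover facilities go anywhere. Since $W(q)\subseteq G$ by Lemma~\ref{lem:locality}, gains in different gaps are disjoint and simply add. This shows $\max\ge\sum_{j\le\ell}\mu_j$.

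\emph{Upper bound.} Take any $Q$ and let $m_i$ be the number of its points in gap $G_i$. Points of $Q$ lying on $P$ win nothing, because ties favour $\mathcal{P}$; I discard them, which only lowers $\ell$. By locality, $Q$ wins $\sum_i g_i(m_i)$, where $g_i(m)$ is the best total a set of $m$ facilities can take inside $G_i$. Lemmas~\ref{lem:block} and~\ref{lem:clear} give
\[
g_i(0)=0,\quad g_i(1)\le h(G_i),\quad g_i(m)\le \wt(V_{G_i})\ \text{for } m\ge2 \qquad\text{(inner gaps)},
\]
and $g_i(m)\le \wt(V_{G_i})$ for $m\ge1$ (outer gaps). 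Each $g_i(m_i)$ is thus bounded by a sum of at most $\min(m_i,2)$ marginals of $G_i$: the first $\min(m_i,2)$ for inner gaps, and at most one for outer gaps. Collecting over all gaps, the total is a sum of at most $\sum_i m_i\le\ell$ distinct entries of $M(P)$, which is at most $\sum_{j\le\ell}\mu_j$.

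One point needs care. $W(q)$ is defined with strict inequality against $\dist(v,P)$ only, not against $Q$'s other points, so the voters won by $Q$ are a union and not a sum. Subadditivity of the union is all that the upper bound uses, so this is harmless.

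\emph{Algorithm.} A linear merge of sorted $V$ and $P$ gives all $V_{G_i}$ and weights. Each $h(G_i)$ comes from a two-pointer sliding window over the sorted voters of $G_i$: find the maximum-weight contiguous block with span $<\lambda_i$, grouping equal coordinates, in time linear in $|V_{G_i}|$. Selecting the $\ell$ largest of the at most $2k$ marginals takes $O(k)$ by linear-time selection. An optimal $Q$ is then read off from the lower-bound construction: for a block $B$ attaining $h(G_i)$, Lemma~\ref{lem:window} gives an explicit $q$ whose open window of length $\lambda_i$ contains $B$.

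\emph{Expected obstacle.} The main obstacle is the bookkeeping that an inner gap's second marginal is never counted without its first. The inequality $h\ge \wt-h$ from Lemma~\ref{lem:clear} is what makes this work. Beyond that, the remaining care is the boundary and tie conventions already absorbed into the lemmas.
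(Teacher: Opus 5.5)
Your proof is correct and follows the paper's argument. Locality splits $Q$'s gain into per-gap value functions, which Lemmas~\ref{lem:block} and~\ref{lem:clear} bound, and the concavity $h(G)\ge\wt(V_G)-h(G)$ makes the $\ell$ largest marginals realisable. The paper does the greedy step by citing the standard fact that a separable concave objective under a cardinality constraint is maximised by taking the $\ell$ largest increments; you instead verify both directions directly and explicitly discard points of $Q$ that lie on $P$, but this is a presentational difference, not a different route.
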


Two consequences match remarks in~\cite{deberg2019oneround}: if $\ell\ge2k$ then
$\mathcal{Q}$ wins every non-absorbed voter, and $\mathcal{P}$ always keeps the
$k$ heaviest locations by occupying them. A third will be used repeatedly.

\begin{lemma}[More facilities never hurt]
\label{lem:monok}
For every $P$, every $x\notin P$ and every $\ell$,
$\min_{|Q|=\ell}|V[(P\cup\{x\})\succ Q]|\ \ge\ \min_{|Q|=\ell}|V[P\succ Q]|$.
In particular $\Gamma_{k+1,\ell}(V)\ge\Gamma_{k,\ell}(V)$.
\end{lemma}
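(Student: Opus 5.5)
The cleanest route is a direct pointwise argument rather than an appeal to the marginals. Adding a facility can only decrease every voter's distance to $\mathcal{P}$, and the winning rule depends only on $\dist(v,P)$ versus $\dist(v,Q)$. So I fix an arbitrary $Q$ with $|Q|=\ell$ and compare the two sets of voters won by $\mathcal{P}$. For every $v\in V$ we have $\dist(v,P\cup\{x\})=\min\bigl(\dist(v,P),|v-x|\bigr)\le\dist(v,P)$. Hence, if $\dist(v,P)\le\dist(v,Q)$, then also $\dist(v,P\cup\{x\})\le\dist(v,Q)$. This gives the multiset inclusion $V[P\succ Q]\subseteq V[(P\cup\{x\})\succ Q]$, and therefore $|V[(P\cup\{x\})\succ Q]|\ge|V[P\succ Q]|$ for every $Q$.

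Taking the minimum over all $Q$ with $|Q|=\ell$ on both sides preserves the inequality, since a pointwise inequality between two functions of $Q$ passes to their minima. This yields the first claim. I note that the minima exist because, by Theorem~\ref{thm:decomp}, each is attained. Even without that, the argument works verbatim with infima.

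For the consequence, I let $P$ with $|P|=k$ attain $\Gamma_{k,\ell}(V)$. If no maximiser exists, I take a sequence of $P$'s approaching the supremum; the argument is identical. I then choose any $x\notin P$, which is possible because $\Rone\setminus P$ is infinite. Now $P\cup\{x\}$ is a valid $(k+1)$-strategy, so
\[
\Gamma_{k+1,\ell}(V)\ \ge\ \min_{|Q|=\ell}|V[(P\cup\{x\})\succ Q]|\ \ge\ \min_{|Q|=\ell}|V[P\succ Q]|\ =\ \Gamma_{k,\ell}(V).
\]

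There is no real obstacle here. The only points needing care are, first, that ties favour $\mathcal{P}$, so the non-strict inequality is the right one to propagate, and second, that $|P|=k$ requires distinct points, which is why I pick $x\notin P$.
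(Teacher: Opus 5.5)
Your proof is correct, and it takes a different and more elementary route than the paper. The paper argues through the gap decomposition. Inserting $x$ splits one gap $G$ into $G',G''$ and absorbs any voters at $x$. Every block admissible in $G'$ or $G''$ is admissible in $G$, so $h(G'),h(G'')\le h(G)$, and also $\wt(V_{G'})+\wt(V_{G''})\le\wt(V_G)$. This gives $\sigma_{G'}(t')+\sigma_{G''}(t'')\le\sigma_G(t'+t'')$, so every allocation of $\mathcal{Q}$'s facilities against $P\cup\{x\}$ is dominated by the merged allocation against $P$, and the claim follows via Theorem~\ref{thm:decomp}.

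You instead use only $\dist(v,P\cup\{x\})\le\dist(v,P)$ and the fact that the win test is monotone in $\dist(v,P)$. That yields the inclusion $V[P\succ Q]\subseteq V[(P\cup\{x\})\succ Q]$ for each fixed $Q$. Your argument has several advantages:
\begin{itemize}
\item It is shorter.
\item It does not rely on Lemmas~\ref{lem:block} and~\ref{lem:clear} or Theorem~\ref{thm:decomp}.
\item It proves a stronger per-$Q$ statement, not just a comparison of optimal responses.
\item It holds verbatim in any dimension or metric and for positive real weights, so it also covers the use of the lemma in Section~\ref{sec:ell1}.
\end{itemize}
The paper's version stays inside the marginal framework: it shows explicitly that refining a gap can only shrink the responder's value functions, which matches how the later domination arguments are phrased. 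For the lemma as stated, though, your inclusion is all that is needed.

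One minor point: you need neither Theorem~\ref{thm:decomp} nor limiting sequences to guarantee existence. The payoff $|V[P\succ Q]|$ takes only finitely many values (subsums of the finitely many multiplicities or weights), so both the inner minimum and the outer maximum are attained.
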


\begin{proof}
The point $x$ splits one gap $G$ into $G',G''$ and leaves the other value
functions unchanged; voters at $x$ become absorbed, so
$\wt(V_{G'})+\wt(V_{G''})\le\wt(V_G)$. Every block admissible for $G'$ or $G''$
(Lemma~\ref{lem:block}) lies in $G$ with span below the smaller threshold, so
$h(G'),h(G'')\le h(G)$ when $G$ is inner; when $G$ is outer each single-facility
value is at most $\wt(V_G)=\sigma_G(1)$. Hence for all $t',t''\ge0$,
$\sigma_{G'}(t')+\sigma_{G''}(t'')\le\sigma_G(t'+t'')$: for $t'+t''=1$ the single
facility earns at most $\sigma_G(1)$, and for $t'+t''\ge2$ at most
$\wt(V_{G'})+\wt(V_{G''})\le\sigma_G(2)$. Every allocation against $P\cup\{x\}$
is thus dominated by the merged allocation against $P$. \qed
\end{proof}

\section{$\Gamma_{k,1}$ in $O(n\log n)$ Time}
\label{sec:ell1}

Fix $\ell=1$. In this section the voters may carry arbitrary positive real
weights: $V$ consists of $m$ distinct positions $u_1<\dots<u_m$ with weights
$\omega_1,\dots,\omega_m>0$, $\wt(S)$ is the total weight, and $n:=\wt(V)$; the
unit-weight game is the special case $\omega_i\in\mathbb{Z}_{>0}$, and we write
$n$ for $m$ in running times since $m\le n$ there. Nothing in
Section~\ref{sec:structure} used integrality. By Theorem~\ref{thm:decomp} the
responder takes the single largest marginal, so
\[
  \Gamma_{k,1}(V)\;=\;n-\tau^\ast,
  \qquad
  \tau^\ast:=\min_{|P|=k}\ \max M(P),
\]
a min--max problem for $\mathcal{P}$. For a target $t\ge0$ call $P$
\emph{$t$-safe} if
\[
   \wt(V_{G_0})\le t,\qquad \wt(V_{G_k})\le t,\qquad
   h(G_i)\le t \quad (1\le i\le k-1).
\]
Then $\tau^\ast\le t$ iff some $P$ with $|P|\le k$ is $t$-safe (by
Lemma~\ref{lem:monok} unused points may be added). In centroid terms a minimal
$t$-safe set is Spoerhase and Wirth's \emph{$W$-bounding
set}~\cite{spoerhase2009centroid}, and the left-to-right greedy below is theirs
and Lazar and Tamir's~\cite[\S6.4]{lazar2013improved}; what is new is the
$O(n)$ evaluation of a whole sweep in Theorem~\ref{thm:ell1}.

\begin{lemma}[Monotonicity]
\label{lem:mono}
Fix $t$ and $a\in\Rone$ and let $N_t(a):=\sup\{b>a:\ h((a,b))\le t\}$. Then
\begin{enumerate}
\item[(i)] $h((a,b))$ is non-decreasing in $b$ and non-increasing in $a$;
\item[(ii)] the supremum is attained whenever $N_t(a)<\infty$;
\item[(iii)] $N_t$ is non-decreasing.
\end{enumerate}
\end{lemma}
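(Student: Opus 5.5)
We argue directly from the definition of $h$ in Lemma~\ref{lem:block}, read for an arbitrary open interval $(a,b)$ with half-length $\lambda=\tfrac12(b-a)$: $h((a,b))$ is the largest weight of a contiguous block $B\subseteq V\cap(a,b)$ with $\spn(B)<\lambda$.

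\emph{(i).} If $b\le b'$, then $V\cap(a,b)\subseteq V\cap(a,b')$ and $\tfrac12(b-a)\le\tfrac12(b'-a)$. So every block admissible for $(a,b)$ is also admissible for $(a,b')$: it is still contiguous, being of the form $V\cap J$ with $J\subseteq(a,b)$, and its span is still below the threshold. Hence $h((a,b))\le h((a,b'))$, and the same argument with $a$ decreasing gives monotonicity in $a$. This is the same inclusion argument already used in the proof of Lemma~\ref{lem:monok}.

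\emph{(ii).} Suppose $N:=N_t(a)<\infty$. We want $h((a,N))\le t$. By (i) the set $\{b>a: h((a,b))\le t\}$ is an interval starting at $a$, so it suffices to show $b\mapsto h((a,b))$ is left-continuous, or at least that $h((a,N))\le\sup_{b<N}h((a,b))$. Take a block $B$ admissible for $(a,N)$. Then $B$ is finite, lies in the open interval $(a,N)$, and has $\spn(B)<\tfrac12(N-a)$ strictly. Both of these are strict conditions on finitely many points, so they still hold for every $b<N$ close enough to $N$: $\max B<b$ and $\spn(B)<\tfrac12(b-a)$. Thus $B$ is admissible for $(a,b)$, and $\wt(B)\le h((a,b))\le t$. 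Taking the maximum over $B$ gives $h((a,N))\le t$. This is the step that needs care, and it works precisely because the gaps and the span condition are both strict, as fixed in Lemma~\ref{lem:window} and Lemma~\ref{lem:block}. If $h((a,b))\le t$ for no $b>a$, the supremum should be read with the convention that the small-$b$ case is covered: for $b$ close to $a$, $V\cap(a,b)=\emptyset$ and $h=0\le t$ because $t\ge0$. So the set is never empty.

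\emph{(iii).} Let $a\le a'$. By (i), for every $b>a'$ we have $h((a',b))\le h((a,b))$. So every $b>a'$ admissible for $a$ is admissible for $a'$, which gives $N_t(a')\ge\sup\{b>a': h((a,b))\le t\}$. If $N_t(a)>a'$, this supremum equals $N_t(a)$, because the admissible set for $a$ is an interval starting at $a$. If $N_t(a)\le a'$, then $N_t(a')>a'\ge N_t(a)$ trivially. In both cases $N_t(a')\ge N_t(a)$; the case $N_t(a)=\infty$ is covered by the same argument.
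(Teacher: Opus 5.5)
Your proof is correct and follows the paper's route. Part (i) is the same inclusion argument. Your (ii), where every block admissible for $(a,N)$ stays admissible for $b$ slightly below $N$ because both conditions are strict, is exactly the openness the paper expresses as the closed conditions $b\le\max\{\max B,\ a+2\spn(B)\}$ on each heavy block.

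The only difference is in (iii). You argue with suprema instead of invoking attainment from (ii), and this explicitly covers the case $N_t(a)=\infty$, which the paper's one-line argument passes over.
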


\begin{proof}
(i) Enlarging $(a,b)$ enlarges both the set of voters it contains and the
threshold $\lambda$, so it can only enlarge the family of admissible blocks.

(ii) By Lemma~\ref{lem:block}, $h((a,b))\le t$ iff no contiguous block $B$ with
$\wt(B)>t$ is simultaneously \emph{contained} in $(a,b)$ and \emph{coverable},
i.e.\ iff every such block $B$ with $\min B>a$ satisfies
\[
  \max B\ \ge\ b \qquad\text{or}\qquad \spn(B)\ \ge\ \tfrac12(b-a),
\]
equivalently $b\le\max\bigl\{\max B,\ a+2\spn(B)\bigr\}$. Both escapes matter:
a heavy block is harmless either because the window is too narrow to cover it
\emph{or} because it does not lie inside the gap --- a single heavy voter, of
span $0$, is neutralised only by the second. This is a finite conjunction of
closed conditions on $b$, so the feasible set is closed on the right.

(iii) Let $a\le a'$. If $N_t(a)\le a'$ there is nothing to show. Otherwise
$b:=N_t(a)>a'$ and by (ii) and (i), $h((a',b))\le h((a,b))\le t$, so
$b\le N_t(a')$. \qed
\end{proof}

\begin{lemma}[Greedy]
\label{lem:greedy}
Fix $t$, let $p_1:=\sup\{x:\ \wt(V\cap(-\infty,x))\le t\}$ and
$p_{j+1}:=N_t(p_j)$ (stopping if some $N_t(p_j)=\infty$). A $t$-safe strategy
with at most $k$ points exists iff the process stops within $k$ steps or
$\wt(V\cap(p_k,\infty))\le t$.
\end{lemma}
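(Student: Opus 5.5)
The plan is a standard exchange (``greedy stays ahead'') argument, with the two directions handled separately.

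\emph{Sufficiency.} Suppose the process stops within $k$ steps, say $N_t(p_j)=\infty$ for some $j\le k$. Take $P=\{p_1,\dots,p_j\}$. Then $\wt(V_{G_0})=\wt(V\cap(-\infty,p_1))\le t$, because the supremum defining $p_1$ is attained: the map $x\mapsto\wt(V\cap(-\infty,x))$ is left-continuous and non-decreasing. Each inner gap $(p_i,p_{i+1})$ satisfies $h\le t$ by Lemma~\ref{lem:mono}(ii). For the outer gap $(p_j,\infty)$, the condition $N_t(p_j)=\infty$ gives $h((p_j,b))\le t$ for all $b$. Letting $b\to\infty$, every voter eventually lies in a block of span below the threshold, so $\wt(V\cap(p_j,\infty))\le t$. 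If instead the process runs $k$ steps and $\wt(V\cap(p_k,\infty))\le t$, the same checks apply to $\{p_1,\dots,p_k\}$. Strategies with fewer than $k$ points are allowed by the remark preceding the lemma.

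\emph{Necessity.} Let $Q=\{q_1<\dots<q_r\}$, $r\le k$, be $t$-safe. I show by induction that $q_i\le p_i$ for every $i$ for which $p_i$ is defined.
\begin{itemize}
\item \emph{Base case.} $\wt(V\cap(-\infty,q_1))\le t$, so $q_1\le p_1$ by the definition of $p_1$ as a supremum.
\item \emph{Inductive step.} From $h((q_i,q_{i+1}))\le t$ we get $q_{i+1}\le N_t(q_i)$. Since $q_i\le p_i$, Lemma~\ref{lem:mono}(iii) gives $N_t(q_i)\le N_t(p_i)=p_{i+1}$.
\end{itemize}

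Now suppose the process does not stop within $k$ steps. Then $p_r$ is defined and $p_r\ge q_r$. Hence $\wt(V\cap(p_r,\infty))\le\wt(V\cap(q_r,\infty))\le t$. This holds for $r=k$ directly. For $r<k$, monotonicity gives $p_k\ge p_r$, so the tail beyond $p_k$ is no heavier. Either way the stated condition holds.

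\emph{Main obstacle.} Two boundary cases need care.
\begin{itemize}
\item \emph{Voters exactly at a point $p_i$.} They are absorbed, which is consistent with using open gaps. One must check that $h$ on the open interval ignores them, and that Lemma~\ref{lem:mono}(ii) really yields $h((p_i,N_t(p_i)))\le t$ when the voter set jumps at $b$.
\item \emph{The implication from $N_t(p_j)=\infty$ to a light right tail.} Any finite block eventually has span below $\tfrac12(b-a)$ as $b$ grows, so taking the whole tail as the block gives its weight $\le t$.
\end{itemize}
I would verify both using the closed-condition characterisation in the proof of Lemma~\ref{lem:mono}(ii).
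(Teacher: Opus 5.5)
Your proposal is correct and follows the paper's proof. Sufficiency is by construction, and you usefully spell out two steps the paper leaves implicit: why the supremum defining $p_1$ is attained, and why $N_t(p_j)=\infty$ forces $\wt(V\cap(p_j,\infty))\le t$. Necessity is the same stay-ahead induction $q_{i+1}\le N_t(q_i)\le N_t(p_i)=p_{i+1}$ via Lemma~\ref{lem:mono}(iii). The only case you skip is $p_1=+\infty$, i.e.\ $\wt(V)\le t$, where a single point to the right of all voters is trivially $t$-safe.
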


\begin{proof}
Sufficiency is by construction ($p_1$ is the position of the first voter at
which the prefix weight exceeds $t$, or $+\infty$). For necessity let
$P'=\{p'_1<\dots<p'_{m'}\}$, $m'\le k$, be $t$-safe; we show $p'_j\le p_j$ by
induction. $p'_1\le p_1$ by definition of $p_1$. If $p'_j\le p_j$ then
$h((p'_j,p'_{j+1}))\le t$ gives $p'_{j+1}\le N_t(p'_j)\le N_t(p_j)=p_{j+1}$ by
Lemma~\ref{lem:mono}(iii). Hence $p'_{m'}\le p_k$ and
$\wt(V\cap(p_k,\infty))\le\wt(V\cap(p'_{m'},\infty))\le t$. \qed
\end{proof}

\begin{theorem}
\label{thm:ell1}
For voters with positive real weights and every $k\ge1$, $\Gamma_{k,1}(V)$ and
an optimal strategy for $\mathcal{P}$ can be computed in $O(n\log n)$ time
($O(n)$ per candidate value $t$, for $O(\log n)$ candidates).
\end{theorem}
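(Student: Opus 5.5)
By Lemma~\ref{lem:greedy}, for fixed $t$ deciding whether $\tau^\ast\le t$ reduces to one left-to-right sweep $p_1,p_2=N_t(p_1),\dots$, stopped after $k$ steps or when the suffix weight beyond the current point is at most $t$. The proof therefore has three parts. First, show that one sweep can be evaluated in $O(n)$ time after an $O(n\log n)$ preprocessing that does not depend on $k$. Second, show that $\tau^\ast$ lies in a structured set of $O(n^2)$ candidate values. Third, search that set with $O(\log n)$ sweeps. The optimal $P$ is then the greedy output for $t=\tau^\ast$, padded with arbitrary extra points via Lemma~\ref{lem:monok}, and $\Gamma_{k,1}=n-\tau^\ast$.

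\textbf{One sweep in $O(n)$.} By the characterisation in the proof of Lemma~\ref{lem:mono}(ii),
\[
N_t(a)=\min_B \max\{\max B,\ a+2\spn(B)\},
\]
where $B$ ranges over contiguous blocks with $\wt(B)>t$ and $\min B>a$. It suffices to take the \emph{minimal} heavy blocks. For each left index $i$, let $r_t(i)$ be the least $j$ with $\omega_i+\dots+\omega_j>t$. A block starting at $u_i$ may be shrunk on the right to $[u_i,u_{r_t(i)}]$, and this only decreases both $\max B$ and $\spn(B)$. The function $r_t$ is computed for all $i$ at once by two pointers in $O(m)$. For a given $a$, the candidates are then the blocks with $i\ge i_0(a)$, the first index with $u_i>a$.

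The remaining difficulty is computing $\min_{i\ge i_0}\max\{u_{r(i)},\,a+2(u_{r(i)}-u_i)\}$ quickly. Both $u_{r(i)}$ and $u_{r(i)}-u_i$ matter, so this is a lower envelope of a minimum of maxima. The key simplification: $u_{r(i)}$ is non-decreasing in $i$, so the first term is minimised at $i=i_0$. The blocks with $i>i_0$ can only help through a smaller span, and the optimum is the earliest index at which $a+2\spn$ drops below $u_{r(i)}$.

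I would exploit that the sweep points $p_j$ are increasing. This makes $i_0$ monotone, so a single amortised pointer together with a monotone stack of spans (suffix minima of $\spn$ are not enough, because the first term grows) should let each $N_t(p_j)$ be found with total $O(m)$ work across the sweep. This amortisation is the step I expect to be the main obstacle. I would first try to prove that the minimising index is non-decreasing in $a$, as in a SMAWK-type monotonicity, since both terms are monotone in $i$ and the crossing point moves right as $a$ increases.

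\textbf{Candidates and search.} Since $\tau^\ast$ is a value of $\max M(P)$ at an optimum, it equals the weight of some contiguous block: a prefix, a suffix, a block $h(G_i)$, or the complement of such a block inside a gap. So $\tau^\ast$ is a difference of prefix sums $S_j-S_i$. These values form a sorted matrix (monotone in both $i$ and $j$), so Frederickson--Johnson selection gives a binary search over them. It uses $O(\log n)$ calls to the $O(n)$ decision sweep of Lemma~\ref{lem:greedy}, and total overhead $O(n\log n)$ for the matrix selection. The answer is the least candidate $t$ at which the sweep succeeds.

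One subtlety is that feasibility is monotone in $t$ but uses non-strict inequalities. This is why the minimum is attained at a candidate value, and it relies on Lemma~\ref{lem:mono}(ii).
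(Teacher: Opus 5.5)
Your reduction to the minimal heavy blocks $[u_i,u_{r_t(i)}]$, your formula for $N_t(a)$, and your outer search all match the paper. By outer search I mean block weights as differences of prefix sums, Frederickson--Johnson selection, and $O(\log n)$ decision calls. The gap is the step you flag yourself: evaluating a whole greedy sweep in $O(m)$, which is exactly what the paper claims as new. As written you only have the trivial $O(m)$ per evaluation of $N_t$, so $O(km)$ per test, which does not give the stated bound.

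Your ``key simplification'' is also false. Let $i^\ast\ge i_0$ be the first index with $a+2(u_{r(i)}-u_i)\le u_{r(i)}$. This index only bounds the search from the right, since every later block has value $\ge u_{r(i)}\ge u_{r(i^\ast)}$. An earlier block whose value is its second term can still be strictly better. Take $a=0$, unit weights at $1,3,10,10.1$, and $t=1.5$. The candidate values are $4,14,10.1$, so $N_t(0)=4$ is attained at $i_0$, not at $i^\ast$. Truncating the scan at $i^\ast$ does not amortise either: $u_{i^\ast}$ can lie far beyond $N_t(a)$, and those voters are rescanned by the next call.

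The missing idea is a stopping rule bounded by the answer itself. Every candidate $\max\{u_{r(i)},\,a+2(u_{r(i)}-u_i)\}$ is at least $u_{r(i)}\ge u_i$. So scan $i$ upward from $i_0(a)$, keep the running minimum $b$, and stop at the first $i$ with $u_i\ge b$; neither that block nor any later one can go below $b$.

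The minimiser satisfies $u_i\le N_t(a)$. Hence the scan reaches it and then ends at the first voter $\ge N_t(a)$, touching only the voters in $(a,N_t(a)]$ plus one. The greedy gaps $(p_j,p_{j+1})$ are disjoint, so the whole sweep costs $O(m+k)$. It also reports how many points were used, which answers the test for every $k$ at once.

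You need no stack, no argmin monotonicity and no SMAWK. Those tools would be awkward here anyway. The values need not be valley-shaped in $i$, as $4,14,10.1$ shows, so knowing the argmin moves right does not tell you when to stop. Also, the queries $a=p_j$ are generated online from previous answers, so offline SMAWK does not apply as is.
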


\begin{proof}
\emph{One safety test in $O(n)$ time.} Fix $t$. Compute, by a two-pointer sweep,
$j(i):=\min\{j\ge i:\ \omega_i+\dots+\omega_j>t\}$ (or $\infty$), the shortest
\emph{heavy} block starting at $u_i$. If a block $B'$ is heavy then the minimal
block $B=[u_i,u_{j(i)}]$ with the same left endpoint satisfies $B\subseteq B'$,
so whenever $B'$ is contained in the gap and coverable, so is $B$; hence it
suffices to impose Lemma~\ref{lem:mono}(ii) for these $m$ blocks:
\[
  N_t(a)\;=\;\min_{\,u_i>a}\ \max\bigl\{\,u_{j(i)},\ \ a+2\,(u_{j(i)}-u_i)\,\bigr\},
  \qquad \min\emptyset=\infty.
\]
Evaluate it by scanning $i$ upwards from the first voter right of $a$, keeping
the running minimum $b$ and stopping as soon as $u_i\ge b$ (a block starting at
or beyond $b$ is not inside the gap). Every candidate is at least $u_{j(i)}\ge
u_i$, so the minimiser $i^\ast$ has $u_{i^\ast}\le N_t(a)$; positions are
scanned in increasing order, so once $i^\ast$ is passed the running minimum is
$N_t(a)$ and the scan stops at the first voter $\ge N_t(a)$. The scan therefore
touches only voters in $(a,N_t(a)]$, plus one. The gaps produced by the greedy
are disjoint, so a whole sweep of Lemma~\ref{lem:greedy} costs $O(m+k)$; it
also reports the number of points used, hence answers the test for all $k$
simultaneously.

\emph{The outer search.} Every marginal is the weight of a contiguous block, so
$\tau^\ast\in\mathcal{W}:=\{\,\Omega_j-\Omega_i:\ 0\le i\le j\le m\,\}$ with
$\Omega_i:=\omega_1+\dots+\omega_i$; the map $(i,j)\mapsto\Omega_j-\Omega_i$ is an
$X+Y$ matrix, in which the element of any prescribed rank is found in $O(m)$
time by Frederickson and Johnson~\cite{frederickson1982selection}. Safety is
monotone in $t$, so binary search over ranks needs $O(\log m^2)=O(\log m)$
selections and tests, i.e.\ $O(m\log m)$ time in all, after sorting. (In the
unit-weight game one may simply binary-search $t\in\{0,\dots,n\}$.) The greedy's
witness at $\tau^\ast$ is an optimal $P$. \qed
\end{proof}

Lazar and Tamir~\cite[Theorem~6.2]{lazar2013improved} obtain $O(kn\log^2 n)$ for
the same problem; their test locates each of the $k$ leader points by a binary
search costing $O(n\log n)$, whereas the sweep above places all of them in one
$O(n)$ pass. On the Voronoi-game side the only previous bound is the $O(kn^4)$
general algorithm of~\cite{deberg2019oneround}.

\section{The Facility Advantage $k^\ast(\ell)$}
\label{sec:advantage}

We return to unit weights (multisets) and ask how large a numerical advantage
the first player needs in order to win unconditionally.

\begin{definition}
$k^\ast(\ell):=\min\{k\ :\ \Gamma_{k,\ell}(V)\ge n/2$ for every finite multiset
$V\subset\Rone\}$.
\end{definition}

By Lemma~\ref{lem:monok} the property defining $k^\ast$ is monotone in $k$, so
an instance on which $\mathcal{P}$ loses with $k$ facilities proves
$k^\ast(\ell)>k$.

\subsection{An upper bound}

\begin{lemma}[Balanced placement]
\label{lem:balanced}
For every multiset $V$ of $n$ voters and every $k\ge1$, $\mathcal{P}$ can place
$k$ points so that every gap contains at most
$B:=\bigl\lceil (n-k)/(k+1)\bigr\rceil$ voters.
\end{lemma}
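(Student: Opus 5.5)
A greedy left-to-right placement on the sorted multiset $v_1\le\dots\le v_n$ does the job. Voters lying exactly on a point of $P$ are absorbed and belong to no gap. So it suffices to choose $k$ positions so that the open gaps between consecutive chosen positions (and the two outer rays) each contain at most $B$ voters. Placing a point at a voter location costs nothing and only removes voters from gaps, so we restrict $P$ to voter locations, with arbitrary spare points if needed.

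\emph{Idealised case (distinct positions).} Here one picks every $(B+1)$-st voter: set $p_j:=v_{j(B+1)}$ for $j=1,\dots,k$ as long as $j(B+1)\le n$. Between consecutive picks there are exactly $B$ voters, and before $p_1$ there are $B$. After $p_k$ there remain $n-k(B+1)$ voters, and
\[
n-k(B+1)\le B \iff n-k\le (k+1)B,
\]
which holds by the definition of $B$. If $j(B+1)>n$ for some $j\le k$, then the remaining voters are fewer than $B+1$ and we stop early; the unused points go anywhere, which is harmless since adding points never enlarges a gap.

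\emph{Multiplicities.} This is the only real obstacle. Coincident voters all sit on one point, so a chosen location absorbs its whole multiplicity, and this can only help. We run the greedy with this in mind. Maintain a pointer $r$, the index of the last voter already absorbed or covered, starting at $r=0$. At step $j$, let $x$ be the location of voter $v_{r+B+1}$ and put $p_j:=x$. The open gap $(p_{j-1},x)$ contains only voters with indices in $(r,r+B+1)$, so at most $B$ of them. Then advance $r$ to the last index whose voter equals $x$, so $r\ge r_{\text{old}}+B+1$.

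By induction, after $j$ steps $r\ge j(B+1)$. Hence after $k$ steps at most $n-k(B+1)\le B$ voters remain to the right of $p_k$, by the inequality above. If at some step $r+B+1>n$, the final ray already holds at most $B$ voters and we stop. Distinctness of the chosen points is automatic, because each $x$ lies strictly right of the previous one.

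The only care needed is the bookkeeping that multiplicity pushes $r$ forward by at least $B+1$ per step. This is immediate from the construction, so the proof is short.
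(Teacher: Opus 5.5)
Your proof is correct and is essentially the paper's argument: the same left-to-right greedy that places $p_j$ at voter $v_{r+B+1}$ and advances $r$ past all voters coinciding with $p_j$ (the paper's $r_j=|\{i:v_i\le p_j\}|$), so that $r$ grows by at least $B+1$ per step. It has the same early-stopping case and the same final count $n-k(B+1)\le B$. The distinct-positions warm-up is unnecessary but harmless.
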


\begin{proof}
Let $v_1\le\dots\le v_n$ and $r_0:=0$. For $j=1,\dots,k$: if $r_{j-1}+B\ge n$,
place the remaining points to the right of $v_n$ and stop --- the last
non-empty gap then holds $n-r_{j-1}\le B$ voters. Otherwise put
$p_j:=v_{\,r_{j-1}+B+1}$ and $r_j:=|\{i:v_i\le p_j\}|$. The voters strictly
between $p_{j-1}$ and $p_j$ (or left of $p_1$) are among
$v_{r_{j-1}+1},\dots,v_{r_{j-1}+B}$, so at most $B$; and $r_j\ge r_{j-1}+B+1$.
Hence $r_k\ge k(B+1)$ and the right outer gap holds
$n-r_k\le n-k-kB\le B$, as $B\ge(n-k)/(k+1)$. \qed
\end{proof}

\begin{theorem}
\label{thm:upper}
$k^\ast(\ell)\le 2\ell-1$ for every $\ell\ge1$.
\end{theorem}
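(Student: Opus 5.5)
The plan is to take $k=2\ell-1$ and let $\mathcal{P}$ play the balanced placement of Lemma~\ref{lem:balanced}. We then bound what $\mathcal{Q}$ can take using the gap decomposition (Theorem~\ref{thm:decomp}). Finally, Lemma~\ref{lem:monok} gives monotonicity in $k$, so $\Gamma_{2\ell-1,\ell}(V)\ge n/2$ for every $V$ yields $k^\ast(\ell)\le 2\ell-1$.

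\emph{Step 1: no gap is heavy.} With $k=2\ell-1$ we have $k+1=2\ell$, so Lemma~\ref{lem:balanced} gives a $P$ in which every gap contains at most $B=\lceil (n-2\ell+1)/(2\ell)\rceil$ voters. Since $\lceil x/m\rceil\le (x+m-1)/m$ for integers $x$ and $m\ge1$, taking $x=n-2\ell+1$ and $m=2\ell$ gives
\[
B\le \frac{(n-2\ell+1)+(2\ell-1)}{2\ell}=\frac{n}{2\ell}.
\]
If $n<2\ell-1$, the numerator is negative and $B\le0$. In that case every voter is absorbed and the claim is trivial, so we may assume $n\ge 2\ell-1$.

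\emph{Step 2: $\mathcal{Q}$ wins at most $\ell B$.} By Theorem~\ref{thm:decomp}, $\mathcal{Q}$ gains exactly the sum of the $\ell$ largest marginals in $M(P)$. I would bound this sum by $\ell B$ using the gap structure rather than counting the marginals individually. Group the chosen marginals by the gap they come from. Each gap contributes either one marginal (at most $\wt(V_G)\le B$) or, for an inner gap, both $h(G)$ and $\wt(V_G)-h(G)$ (together exactly $\wt(V_G)\le B$). In the latter case that gap used two of the $\ell$ slots. So if the chosen marginals touch $g\le\ell$ gaps, their total is at most $gB\le \ell B$. Equivalently, Lemma~\ref{lem:locality} shows that each facility of $\mathcal{Q}$ wins voters from a single gap, so $\ell$ facilities reach at most $\ell$ gaps. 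Either way, $\mathcal{Q}$ wins at most $\ell B\le n/2$ voters. Hence $\mathcal{P}$ keeps at least $n-n/2=n/2$, and the absorbed voters only help $\mathcal{P}$.

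\emph{Main obstacle.} The only delicate point is the rounding in $B$. The proof needs $\ell\lceil (n-k)/(k+1)\rceil\le n/2$ exactly, with no slack lost to the ceiling, and this is precisely where the choice $k+1=2\ell$ is used. If the ceiling estimate were too crude, I would instead check the worst residue class $n\equiv 2\ell-1\pmod{2\ell}$ directly, where equality holds. Everything else is a direct application of the earlier lemmas.
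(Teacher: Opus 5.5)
Your proof is correct and follows the paper's route: the balanced placement with $k+1=2\ell$ gaps, then the bound $\ell B\le n/2$ on what $\mathcal{Q}$ can win. The differences are cosmetic. The paper bounds each of the $\ell$ largest marginals by $B$ directly, since $h(G)$ and $\wt(V_G)-h(G)$ are each at most $\wt(V_G)$, so your per-gap grouping is not needed. It also evaluates $B=\lfloor n/(2\ell)\rfloor$ by writing $n=2\ell s+r$, where you use $\lceil x/m\rceil\le(x+m-1)/m$. Your separate case $n<2\ell-1$ and the appeal to Lemma~\ref{lem:monok} are unnecessary but harmless.
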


\begin{proof}
Let $k=2\ell-1$ and take $P$ from Lemma~\ref{lem:balanced}, so every gap holds at
most $B=\lceil (n-2\ell+1)/(2\ell)\rceil$ voters. Every entry of $M(P)$ is at
most the weight of its gap, hence at most $B$, so by Theorem~\ref{thm:decomp}
$\mathcal{Q}$ wins at most $\ell B$. Write $n=2\ell s+r$, $0\le r<2\ell$; then
$(n-2\ell+1)/(2\ell)=s-1+(r+1)/(2\ell)$ with $0<r+1\le2\ell$, so $B=s$ and
$\mathcal{Q}$ wins at most $\ell s\le n/2$. \qed
\end{proof}

For $\ell=1$ this gives $k^\ast(1)=1$: a median of $V$ is a majority point on a
line.

\subsection{Uniform instances are not extremal}

Let $U_m$ be the \emph{uniform} instance: $m$ voters of multiplicity $1$ in
arithmetic progression.

\begin{proposition}
\label{prop:uniform}
Let $k\ge\ell$ and suppose $2k$ divides $m$. Then $\Gamma_{k,\ell}(U_m)\ge m/2$,
attained with all $2k$ marginals equal to $m/(2k)$.
\end{proposition}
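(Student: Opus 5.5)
We give an explicit placement of $k$ points and read off its marginals via Theorem~\ref{thm:decomp}. Put $U_m=\{1,2,\dots,m\}$ (scaling and translating do not change the game) and $s:=m/(2k)$, a positive integer. The natural choice puts the left outer gap around $s$ voters, each inner gap around $2s$ voters, and the right outer gap around $s$ voters. A placement with points exactly on voters would absorb $k$ voters, which only helps $\mathcal{P}$, but it makes the marginals unequal. So we place the points strictly between voters:
\[
p_j:=(2j-1)s+\tfrac12,\qquad j=1,\dots,k .
\]
Then $V_{G_0}=\{1,\dots,s\}$ and $V_{G_k}=\{m-s+1,\dots,m\}$, each of weight $s$. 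Each inner gap $G_j=(p_j,p_{j+1})$ contains exactly the $2s$ consecutive voters $(2j-1)s+1,\dots,(2j+1)s$.

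Next we compute $h(G_j)$ with Lemma~\ref{lem:block}. The gap has length $|G_j|=2s$, so $\lambda=s$. A contiguous block of $r$ consecutive unit-spaced voters has span $r-1$, and the condition $\spn(B)<\lambda$ means $r-1<s$, i.e.\ $r\le s$. Hence $h(G_j)=s$ and $\wt(V_{G_j})-h(G_j)=s$. The multiset $M(P)$ therefore consists of exactly $2k$ entries, all equal to $s=m/(2k)$, which gives the claimed equality of marginals.

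By Theorem~\ref{thm:decomp}, $\mathcal{Q}$ takes the sum of the $\ell$ largest marginals, namely $\ell s=\ell m/(2k)$. Since $\ell\le k$, this is at most $m/2$. Hence $\mathcal{P}$ keeps at least $m/2$ voters and $\Gamma_{k,\ell}(U_m)\ge m/2$.

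The only delicate step is the computation of $h$. The strict inequality in Lemma~\ref{lem:block} (open windows) together with the half-integer offset is what makes $h(G_j)$ come out to exactly $s$ rather than $s+1$. Concretely, a block of $s+1$ voters has span $s=\lambda$, which is not $<\lambda$, so it cannot be covered. For this it matters that the gap length is exactly $2s$, which is why the points sit at half-integers.
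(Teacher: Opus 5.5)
Your proof is correct and is essentially the paper's argument. The paper uses the same half-integer placement, with $s$ voters in each outer gap and $2s$ in each inner gap; it makes the same span computation showing a block of $r$ unit-spaced voters is coverable iff $r\le s$, so $h(G_j)=s$; and it makes the same appeal to Theorem~\ref{thm:decomp} to get $\ell s\le ks=m/2$. Your explicit formula $p_j=(2j-1)s+\tfrac12$ simply makes that placement concrete.
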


\begin{proof}
Normalise the voters to $1,\dots,m$ and put $a:=m/(2k)$. Place the $k$ points of
$P$ at \emph{midpoints} $j+\tfrac12$ so that the outer gaps contain $a$ voters
each and every inner gap contains $2a$; this is possible since
$2a+(k-1)2a=m$, and nothing is absorbed. An inner gap has width $2a$, so
$\lambda=a$; a block of $j$ consecutive voters has span $j-1<a$ iff $j\le a$,
hence $h=a$ and the gap contributes the marginals $a,a$. Each outer gap
contributes $a$. So $M(P)$ consists of $2k$ copies of $a$, and $\mathcal{Q}$
wins $\ell a\le ka=m/2$. \qed
\end{proof}

The divisibility hypothesis is an artefact of the argument: for every
$\ell\le5$, $k\ge\ell$ and $m\le30$ we verified by exhaustive computation that
$\mathcal{P}$ still wins $U_m$, trading midpoint placements against placements
on voters as the residue of $m$ modulo $2k$ demands. Consequently any instance
witnessing $k^\ast(\ell)>\ell$ must use non-trivial multiplicities; and since
the bound $2ka\le m$ is tight, the margin by which $\mathcal{Q}$ can win is
small. The lower bound below is of exactly this type.

\subsection{A lower bound}

For $\ell\ge2$ define the instance $F(\ell)$ on the $2\ell+2$ integer positions
$0,1,\dots,2\ell+1$ with multiplicities
\[
  \bigl(\underbrace{2,1,1,2,1,2}_{\text{positions }0..5},
        \underbrace{2,2,\dots,2}_{2\ell-4\ \text{positions}}\bigr),
  \qquad n=4\ell+1 ;
\]
three voters of multiplicity $1$, $2\ell-1$ of multiplicity $2$. For $\ell=2$
this is the $9$-voter instance $\{0,0,1,2,3,3,4,5,5\}$.

\begin{theorem}
\label{thm:lower}
For $2\le\ell\le6$, $\Gamma_{\ell,\ell}\bigl(F(\ell)\bigr)=2\ell<n/2$, hence
$k^\ast(\ell)\ge\ell+1$. In particular $k^\ast(2)=3$ and $4\le k^\ast(3)\le5$.
\end{theorem}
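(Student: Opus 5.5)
The statement splits into a lower bound $\Gamma_{\ell,\ell}(F(\ell))\ge 2\ell$ and an upper bound $\Gamma_{\ell,\ell}(F(\ell))\le 2\ell$. Since $n=4\ell+1$, we have $2\ell<n/2$. Together with Lemma~\ref{lem:monok} this gives $k^\ast(\ell)\ge\ell+1$. For $\ell=2$ it combines with Theorem~\ref{thm:upper} ($k^\ast(2)\le3$) to give $k^\ast(2)=3$, and for $\ell=3$ with $k^\ast(3)\le5$ to give $4\le k^\ast(3)\le5$. So everything reduces to computing the value exactly.

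\emph{Lower bound.} I exhibit one explicit $P$ with $|P|=\ell$ and read off $M(P)$ via Theorem~\ref{thm:decomp}. The natural choice is to place $\ell$ points on heavy positions so that they absorb weight. For instance, occupy positions $0,3,5,7,\dots,2\ell+1$, which are $\ell+1$ positions, one too many. So I will instead try occupying every second weight-$2$ position and let the gaps each contain a single voter position. The marginals are then at most $2$ each. The sum of the $\ell$ largest marginals must be at most $n-2\ell=2\ell+1$, and I will tune the placement until the computed sum is exactly $2\ell+1$. If no clean pattern appears by hand, I fall back on the computer search described next, which outputs an optimal $P$ anyway.

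\emph{Upper bound (the main obstacle).} I must show that every $P$ with $|P|=\ell$ lets $\mathcal{Q}$ win at least $2\ell+1$ voters. $P$ ranges over a continuum, so I invoke Spoerhase and Wirth's half-integer discretisation~\cite[Theorem~3.4]{spoerhase2009centroid}. Because the voters sit at integers, it lets us assume without loss of generality that every $p_i$ lies in $\tfrac12\mathbb{Z}\cap[-\tfrac12,2\ell+\tfrac32]$. The justification is that the marginals $h(G)$ depend only on which voters each gap contains and on comparisons of spans with $\lambda=\tfrac12|G|$, and these change only at half-integer breakpoints. The strictness conventions of Lemma~\ref{lem:block} and the open windows of Lemma~\ref{lem:window} are exactly what make the value piecewise constant with closed-in-the-right-direction pieces. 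This leaves a finite set of $\binom{4\ell+5}{\ell}$ candidate strategies. For each one, I compute $M(P)$ in linear time by Theorem~\ref{thm:decomp}, sum the $\ell$ largest entries, and check that the sum is at least $2\ell+1$. I run this exactly, in integer arithmetic on doubled coordinates, for $\ell=2,\dots,6$, and I cross-check it against a brute-force evaluation of the game definition.

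I expect the delicate point to be verifying that the discretisation theorem transfers correctly to our tie rule, namely that ties go to $\mathcal{P}$ and that absorbed voters are counted for $\mathcal{P}$. I would state this explicitly by rechecking Lemma~\ref{lem:block}'s condition $\spn(B)<\lambda$ at half-integer endpoints. For $\ell=2$ I would additionally give a short hand case analysis on where the two points fall among the nine voters, to make that case human-checkable.
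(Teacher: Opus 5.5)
Your route is the paper's. You pass to half-integer strategies by Lemma~\ref{lem:disc}, enumerate the $\ell$-subsets of a bounded half-integer window while scoring each by Theorem~\ref{thm:decomp}, and finish with Lemma~\ref{lem:monok} and Theorem~\ref{thm:upper}. Checking against the literal game definition instead of on finer grids is an inessential difference.

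The step that is not justified as written is the window $\tfrac12\mathbb{Z}\cap[-\tfrac12,2\ell+\tfrac32]$. Your breakpoint remark gives half-integrality only; it does not confine the points. The span comparisons in the inner gap next to a point lying left of all voters keep changing as that point moves outward. For example, for $F(2)$ and $P=\{p_1,3\}$ the gap $(p_1,3)$ contributes $(h,\wt-h)=(3,1)$ at $p_1\in\{-\tfrac12,-1\}$ but $(4,0)$ for $p_1\le-\tfrac32$.

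What closes this is a domination argument, which the paper supplies:
\begin{enumerate}
\item Moving such a point to $-\tfrac12$ leaves every gap's contents unchanged and shrinks the adjacent inner gap $G$. So $h(G)$ does not increase (Lemma~\ref{lem:mono}(i)) and stays at least $\tfrac12\wt(V_G)$ (Lemma~\ref{lem:clear}).
\item Replacing $(h,\wt-h)$ by $(h',\wt-h')$ with $\tfrac12\wt(V_G)\le h'\le h$ cannot increase the sum of the $\ell$ largest marginals.
\item A second point left of all voters is redundant and can be re-placed using Lemma~\ref{lem:monok}.
\item The right side is symmetric.
\end{enumerate}
With this your window is valid; the paper enumerates the slightly larger $\tfrac12\mathbb{Z}\cap[-1,2\ell+2]$.

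Your sketch of the matching bound $\Gamma_{\ell,\ell}(F(\ell))\ge2\ell$ is also off. A placement with all marginals at most $2$ cannot exist: $\mathcal{Q}$ would then win at most $2\ell$, and $\mathcal{P}$ would keep $2\ell+1>n/2$, contradicting the very statement you are proving. No tuning or search is needed here. $F(\ell)$ has $2\ell-1\ge\ell$ sites of multiplicity $2$, and occupying $\ell$ of them absorbs $2\ell$ voters that $\mathcal{P}$ keeps against every $Q$.

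As for your worry about the tie rule, the paper settles it directly on the line in the remark after Lemma~\ref{lem:disc}. By Lemma~\ref{lem:block}, $h$ of an inner gap with integer contents depends on its width $w$ only through $\max\{d\in\mathbb{Z}:2d<w\}$. This quantity does not increase when each non-integer $p_j$ is moved to $\lfloor p_j\rfloor+\tfrac12$, and the same majorisation remark then applies.
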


The proof is computer-assisted but exact, thanks to the following
discretisation, which Spoerhase and Wirth proved for arbitrary graphs.

\begin{lemma}[{Spoerhase and Wirth~\cite[Theorem~3.4]{spoerhase2009centroid}}]
\label{lem:disc}
Let $V\subset\mathbb{Z}$. Then for all $k,\ell$ the value $\Gamma_{k,\ell}(V)$ is
attained by some $P\subset\tfrac12\mathbb{Z}$.
\end{lemma}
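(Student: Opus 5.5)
The plan is to take an arbitrary optimal $P=\{p_1<\dots<p_k\}$ and replace every non-integral point by the half-integer at the centre of its unit cell. We then check, via Theorem~\ref{thm:decomp}, that no marginal of $M(P)$ increases, so the guaranteed value $\min_{|Q|=\ell}|V[P\succ Q]|$ does not decrease. Write $p_i=j_i+x_i$ with $j_i\in\mathbb{Z}$ and $x_i\in[0,1)$. Points with $x_i=0$ stay where they are; points with $x_i\in(0,1)$ move to $j_i+\tfrac12$.

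\emph{Step 1: voter sets are unchanged.} Since $V\subset\mathbb{Z}$, the open cell $(j_i,j_i+1)$ contains no voter. Moving a point inside its cell therefore changes neither the absorbed multiset $A(P)$ nor the multiset $V_{G}$ of any gap, provided the order of the points is preserved. If two or more points of $P$ lie in the same open cell, the gap between them is empty of voters, so it contributes only zero marginals. We keep one of these points and delete the others, which leaves every non-zero marginal unchanged. We then re-add the deleted points at fresh half-integers, say to the right of $v_n$. By Lemma~\ref{lem:monok} this cannot hurt $\mathcal{P}$. From now on we may assume distinct points occupy distinct cells.

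\emph{Step 2: the only width dependence is through integer thresholds.} Outer-gap marginals are $\wt(V_{G_0})$ and $\wt(V_{G_k})$, which are unchanged by Step~1. For an inner gap $(a,b)$, the marginals are determined by $V_G$ together with $h(G)$. By Lemma~\ref{lem:block}, $h(G)$ depends on the width only through the set of admissible spans. Every block of integer voters has integer span $s$, and it is admissible iff $b-a>2s$. Hence it suffices to show that, for every integer $s\ge0$, $b-a>2s$ implies $b'-a'>2s$ is \emph{not} needed in that direction. What is needed is the converse, $b'-a'>2s\Rightarrow b-a>2s$, where $(a',b')$ is the moved gap. This shows the family of admissible blocks can only shrink, so $h$ cannot increase. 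Note that $\wt(V_G)-h(G)$ might then increase, but this is harmless: it never exceeds the new $h(G)$ by Lemma~\ref{lem:clear}. More precisely, the pair of marginals of the gap is dominated in the sense used in the proof of Lemma~\ref{lem:monok}: for one responder facility the gap yields at most the new $h\le$ old $h$, and for two facilities it yields $\wt(V_G)$ in both cases. Consequently every allocation of $\mathcal{Q}$'s facilities earns no more after the move.

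\emph{Step 3: the case check, which I expect to be the main point.} Let $D=j_{i+1}-j_i\ge1$ and $\delta=x_{i+1}-x_i\in(-1,1)$, so that $b-a=D+\delta$. The new width is $D+\delta'$, where $\delta'\in\{0,\pm\tfrac12\}$ arises from the rounding; in particular $\delta'=0$ when both endpoints are fractional, and $\delta'$ has the same sign as $\delta$ when exactly one endpoint is integral. The key observation is that only even thresholds $2s$ matter.
\begin{itemize}
\item If $D$ is odd, then $D+\delta>2s$ iff $2s\le D-1$, for every $\delta\in(-1,1)$. The same holds for the new width, so nothing changes.
\item If $D$ is even, then $D+\delta>2s$ iff $2s<D$, or $2s=D$ and $\delta>0$. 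After rounding, $\delta'>0$ only if $\delta>0$. This is because $\delta'>0$ requires $x_i=0<x_{i+1}$, which already gives $\delta>0$. Hence the admissible set can only shrink.
\end{itemize}
Since no marginal grows and the absorbed voters are unchanged, the rounded strategy guarantees $\mathcal{P}$ at least as many voters as $P$, so it also attains $\Gamma_{k,\ell}(V)$. The delicate part is making sure that the domination argument of Step~2 handles the asymmetric pair $\{h,\wt(V_G)-h\}$ correctly. I would phrase it exactly as the merged-allocation argument of Lemma~\ref{lem:monok}, rather than entrywise on $M(P)$.
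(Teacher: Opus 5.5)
Your route is the paper's: move every non-integral $p_j$ to $\lfloor p_j\rfloor+\tfrac12$. Gap contents and absorbed voters are unchanged, an inner gap with integer contents sees its width $w$ only through $\max\{d\in\mathbb{Z}:2d<w\}$, and replacing $(h,\wt-h)$ by $(h',\wt-h')$ with $\wt/2\le h'\le h$ cannot help $\mathcal{Q}$. Your parity check in Step~3 correctly proves the paper's one-line claim that this threshold never increases. Your domination argument through per-gap value functions, as in Lemma~\ref{lem:monok}, is the right way to handle the pair $\{h,\wt(V_G)-h\}$.

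The collision step, however, is false as written. Deleting all but one of several points that share an open cell does \emph{not} leave the non-zero marginals unchanged, because it widens an adjacent inner gap. Take voters $-1,0,1,2$ and $P=\{-1.7,\,0.3,\,0.7,\,2.5\}$. The gap $(0.7,2.5)$ has width $1.8$, only singletons are admissible, and its marginals are $(1,1)$. Delete $0.7$ and it becomes $(0.3,2.5)$ of width $2.2$, where $\{1,2\}$ (span $1<1.1$) is admissible and the marginals become $(2,0)$. Deleting $0.3$ instead does the same to $(-1.7,0.3)$, which becomes $(-1.7,0.7)$ of width $2.4$. So whichever point you keep, the intermediate strategy can be strictly worse for $\mathcal{P}$: against $\ell=1$, $\mathcal{Q}$'s gain rises from $1$ to $2$. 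Step~3 only compares the final strategy with this intermediate one, so your chain of inequalities breaks.

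The repair is to reverse the order.
\begin{itemize}
\item Round first, letting points of a common cell coincide at its centre.
\item For two consecutive distinct rounded points, let $p$ be the largest preimage of the left one and $p''$ the smallest preimage of the right one. Since rounding is monotone, $p$ and $p''$ are consecutive in $P$.
\item The gap $(p,p'')$ and its image have the same contents, so your Step~3 applies verbatim to this pair, with $D\ge1$ whenever the gap is non-empty.
\item Every other original gap lay between points with a common image and held no voters.
\item Only then re-add the lost points at fresh half-integers beyond $v_n$, using Lemma~\ref{lem:monok}.
\end{itemize}
The paper's sketch passes over collisions in silence, so you were right to raise them; they just have to be handled this way.
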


On a line this can also be read off Theorem~\ref{thm:decomp}: $h(G)$ depends on
the width $w$ of an inner gap with integer contents only through
$\max\{d\in\mathbb{Z}:2d<w\}$, which never increases when each non-integer
$p_j$ is moved to $\lfloor p_j\rfloor+\tfrac12$, while gap contents and absorbed
voters are unchanged; replacing $(h,\wt-h)$ by $(h',\wt-h')$ with
$\wt/2\le h'\le h$ cannot increase the sum of the $\ell$ largest marginals.

\begin{proof}[of Theorem~\ref{thm:lower}]
The upper bound $\Gamma_{\ell,\ell}(F(\ell))\le2\ell$ was established by
exhaustive enumeration of all $k$-subsets of
$\tfrac12\mathbb{Z}\cap[-1,2\ell+2]$, each evaluated by
Theorem~\ref{thm:decomp}. By Lemma~\ref{lem:disc} half-integers suffice, and a
point more than one unit outside the voter range is dominated: moving it
inwards to the nearest half-integer outside the range keeps every gap's
contents and shrinks its inner gap, which by Lemma~\ref{lem:mono}(i) cannot
increase any marginal (colliding points are re-placed using
Lemma~\ref{lem:monok}). As an independent check the enumeration was repeated
over the finer grids $\tfrac14\mathbb{Z},\tfrac18\mathbb{Z},\tfrac1{12}\mathbb{Z}$
and $\tfrac1{16}\mathbb{Z}$ with identical results. Since $n=4\ell+1$ is odd,
$\Gamma_{\ell,\ell}=2\ell<2\ell+\tfrac12=n/2$: $\mathcal{Q}$ wins against
$k=\ell$, and by Lemma~\ref{lem:monok} against every $k\le\ell$, so
$k^\ast(\ell)\ge\ell+1$. For $2\le\ell\le5$ the same enumeration gives
$\Gamma_{\ell+1,\ell}(F(\ell))=2\ell+2>n/2$, so $F(\ell)$ does not rule out
$k^\ast(\ell)=\ell+1$. With Theorem~\ref{thm:upper}: $k^\ast(2)=3$ and
$4\le k^\ast(3)\le5$. \qed
\end{proof}

\begin{table}[t]
\centering
\caption{The facility advantage on a line. For $\ell=1$ the lower bound is
trivial; for $\ell\ge2$ it is Theorem~\ref{thm:lower}. Exhaustive search over
the instance family of Section~\ref{sec:exp} found no instance defeating
$k=\ell+1$, which is why we expect the middle row to be the truth
(Conjecture~\ref{conj}).}
\label{tab:advantage}
\setlength{\tabcolsep}{9pt}
\begin{tabular}{lcccccc}
\toprule
$\ell$ & $1$ & $2$ & $3$ & $4$ & $5$ & $6$\\
\midrule
lower bound (Thm.~\ref{thm:lower}) & $1$ & $3$ & $4$ & $5$ & $6$ & $7$\\
upper bound $2\ell-1$ (Thm.~\ref{thm:upper}) & $1$ & $3$ & $5$ & $7$ & $9$ & $11$\\
\midrule
$k^\ast(\ell)$ & $\mathbf{1}$ & $\mathbf{3}$ & $4$--$5$ & $5$--$7$ & $6$--$9$ & $7$--$11$\\
\bottomrule
\end{tabular}
\end{table}

\begin{conjecture}
\label{conj}
$k^\ast(\ell)=\ell+1$ for every $\ell\ge2$.
\end{conjecture}

\section{Verification}
\label{sec:exp}

Because several statements replace a continuous optimisation by a combinatorial
formula, we implemented independent checks in exact rational arithmetic.

\paragraph{The decomposition against blind brute force.} We implemented the game
definition literally --- for given $P$ and $Q$, count the voters with
$\dist(v,Q)<\dist(v,P)$ --- and maximised it over all $\ell$-subsets of a dense
rational grid, with no reference to Theorem~\ref{thm:decomp}. On $175$ random
instances ($n\le9$, $k,\ell\le3$, grid step $\tfrac16$ refined to $\tfrac1{24}$
where needed) the two values agreed in every case. The block form of $h$
(Lemma~\ref{lem:block}) was checked against a direct sliding-window computation
on $4000$ random gaps: no discrepancy.

\paragraph{The $\ell=1$ algorithm.} Theorem~\ref{thm:ell1} was compared with
exhaustive maximisation over a grid of resolution $\tfrac18$ of the voter
spacing on $500$ random weighted instances (up to $6$ positions, multiplicities
$\le3$, $k\le4$): no discrepancy. An earlier version of the greedy, which
neutralised a heavy block only by narrowing the window and not by excluding it
from the gap, disagreed on $194$ of $400$ instances; the check has real
discriminating power.

\paragraph{The facility advantage.} We searched instances on $m$ consecutive
integer positions with integer multiplicities (a multiplicity $0$ is an empty
site, so spacings need not be uniform), evaluating $\Gamma_{k,\ell}$ exactly
for each instance by Theorem~\ref{thm:decomp} over all $P\subset\tfrac12\mathbb{Z}$,
which suffices by Lemma~\ref{lem:disc}. The search found instances defeating
$k=\ell=2$ (smallest at $m=6$, $n=9$) and $k=\ell=3$ ($m=8$, $n=13$), and
\emph{no} instance defeating $k=\ell+1$ in any of the following ranges:
$k=3,\ell=2$ with $m\le8$, multiplicities $\le5$, $n\le16$; $k=4,\ell=3$ with
$m\le9$, multiplicities $\le4$, $n\le16$; $k=4,\ell=3$ with $m\le10$,
multiplicities in $\{0,\dots,3\}$, $n\le15$; and $k=5,\ell=4$ with $m\le10$,
multiplicities $\le3$, $n\le17$ --- $238{,}185$ instances in all, up to mirror
symmetry. The family $F(\ell)$ was confirmed for $\ell=2,\dots,6$.
The conclusion of Proposition~\ref{prop:uniform} was confirmed without the
divisibility hypothesis for $m\le30$, $\ell\le5$, $\ell\le k\le\ell+1$.

We stress what a search can and cannot establish: an enumeration over a grid of
strategies for $\mathcal{P}$ gives a \emph{lower} bound on
$\Gamma_{k,\ell}=\max_P\min_Q$, so it certifies that $\mathcal{P}$ wins but never,
by itself, that $\mathcal{Q}$ does. This is why Lemma~\ref{lem:disc} is
load-bearing in Theorem~\ref{thm:lower}, and why the negative results above are
reported as search outcomes rather than theorems.

\section{Concluding Remarks}
\label{sec:concl}

The one-round discrete Voronoi game on a line and the absolute
$(\ell|k)$-centroid on a path are the same problem, and the two literatures
have each solved parts of it: the responder is transparent (Spoerhase and Wirth,
Megiddo et al.), the leader with a single opponent facility is near-linear
(Section~\ref{sec:ell1}), and the general leader problem is polynomial in the
unary demand (de Berg et al.) but NP-hard in binary demand (Spoerhase and
Wirth). Against this background the ``simpler algorithm'' asked for
in~\cite{deberg2019oneround} is really a question about discretising the
leader's continuous positions polynomially in the unary model --- exactly what
Spoerhase and Wirth conjecture to be impossible in the binary model.

Three questions remain. (1) Is Conjecture~\ref{conj} true? Our upper bound
discards the concavity in Lemma~\ref{lem:clear} entirely, which is where the
factor of two is lost. (2) Does the one-pass sweep of Theorem~\ref{thm:ell1}
extend to $\ell=2$, where the responder may also clear a gap with two facilities?
(3) Does the structure survive on a cycle, where the two outer gaps merge into a
single wrap-around gap?

\paragraph{Use of AI tools.} The author used an AI assistant (Claude,
Anthropic) in preparing this paper: for literature search, for drafting the
text, for developing and checking proofs, and for writing the verification code
of Section~\ref{sec:exp}. The author checked every result, proof and citation,
and takes full responsibility for the content.

\appendix
\section{Proofs for Section~\ref{sec:structure}}
\label{app:proofs}

\begin{proof}[of Lemma~\ref{lem:locality}]
Assume $q\in G_i=(p_i,p_{i+1})$; the outer cases are analogous. If $v\le p_i$
then $\dist(v,P)\le p_i-v<q-v=|v-q|$, so $v\notin W(q)$; symmetrically for
$v\ge p_{i+1}$. Hence $W(q)\subseteq G_i$. For $v\in G_i$,
$\dist(v,P)=\min(v-p_i,\;p_{i+1}-v)$, so $v\in W(q)$ iff $|v-q|<v-p_i$ and
$|v-q|<p_{i+1}-v$. If $v\ge q$ the first inequality reads $p_i<q$, which holds;
if $v<q$ it reads $v>\tfrac{p_i+q}{2}$. Symmetrically the second is vacuous for
$v\le q$ and equivalent to $v<\tfrac{q+p_{i+1}}{2}$ for $v>q$. \qed
\end{proof}

\begin{proof}[of Lemma~\ref{lem:window}]
By Lemma~\ref{lem:locality}(c), $W(q)=V\cap I_q$ with
$I_q=\bigl(\tfrac{a+q}{2},\tfrac{a+q}{2}+\lambda\bigr)$. The map
$q\mapsto\tfrac{a+q}{2}$ is a bijection from $(a,b)$ onto $(a,a+\lambda)$, and
$I=(x,x+\lambda)$ satisfies $I\subseteq(a,b)$ iff $x\in[a,a+\lambda]$. This
proves the first claim and the equality case of the second. For $x=a$, an
admissible $x'>a$ close to $a$ has no voter in $(a,x']$ since $V$ is finite, so
$V\cap(x',x'+\lambda)\supseteq V\cap(a,a+\lambda)$; $x=a+\lambda$ is symmetric.
(Equality can fail only if a voter sits exactly at $a+\lambda$.) \qed
\end{proof}

\begin{proof}[of Lemma~\ref{lem:block}]
Every $W(q)$ is a block of span $<\lambda$ by Lemma~\ref{lem:window}, so
$\mathcal{Q}$ wins at most $h(G)$. Conversely let $B$ be a block with
$\spn(B)<\lambda$. An open interval $(x,x+\lambda)$ contains $B$ iff
$\max B-\lambda<x<\min B$, and lies in $(a,b)$ iff $a\le x\le a+\lambda$. The
open interval $(\max B-\lambda,\min B)$ is non-empty because $\spn(B)<\lambda$,
and it meets $[a,a+\lambda]$ because $\min B>a$ and $\max B<b=a+2\lambda$. For
such an $x$, Lemma~\ref{lem:window} gives $q$ with
$W(q)\supseteq V\cap(x,x+\lambda)\supseteq B$. \qed
\end{proof}

\begin{proof}[of Lemma~\ref{lem:clear}]
For $G=G_0$ take $q=p_1-\delta$: $W(q)=V\cap(-\infty,p_1-\delta/2)=V_{G_0}$ for
small $\delta$. For an inner gap $(a,b)$ take $q=a+\delta$, $q'=b-\delta$; their
windows $(a+\tfrac{\delta}{2},\tfrac{a+b+\delta}{2})$ and
$(\tfrac{a+b-\delta}{2},b-\tfrac{\delta}{2})$ cover $V_G$ for small $\delta$.
For concavity let $c=\tfrac12(a+b)$ and split $V_G$ into $A:=V\cap(a,c)$,
$M:=V\cap\{c\}$, $B:=V\cap(c,b)$. Both $A\subseteq(a,c)$ and
$M\cup B\subseteq[c,b)$ are blocks of span $<\lambda$, so by
Lemma~\ref{lem:block} $h(G)\ge\max\{\wt(A),\wt(M)+\wt(B)\}\ge\tfrac12\wt(V_G)$.
\qed
\end{proof}

\begin{proof}[of Theorem~\ref{thm:decomp}]
By Lemma~\ref{lem:locality} the facilities act independently on distinct gaps,
so $Q$ is described by the numbers $t_G$ of facilities per gap, $\sum t_G=\ell$,
with payoff $\sum_G\sigma_G(t_G)$ where $\sigma_G(0)=0$, $\sigma_G(t)=\wt(V_G)$
for $t\ge1$ if $G$ is outer, and $\sigma_G(1)=h(G)$, $\sigma_G(t)=\wt(V_G)$ for
$t\ge2$ if $G$ is inner (Lemmas~\ref{lem:block}, \ref{lem:clear}). The
successive increments of $\sigma_G$ are the entries $G$ contributes to $M(P)$,
followed by zeros, and by Lemma~\ref{lem:clear} they are non-increasing. A sum
of separable concave functions under a cardinality constraint is maximised by
greedily taking the $\ell$ largest increments, which is the claim. For the
running time, one merge of $V$ against $P$ yields all $\wt(V_G)$; each $h(G)$
is a two-pointer scan over $V_G$ maintaining the block weight by prefix sums;
the $\ell$ largest of $\le2k$ numbers are found by linear-time selection. \qed
\end{proof}

\end{document}